\title{\LARGE \bf
Opportunistic Qualitative Planning in Stochastic Systems with  Preferences over Temporal Logic Objectives
}
\author{Abhishek Ninad Kulkarni$^{\ast}$ and Jie Fu% <-this % stops a space
\thanks{A. N. Kulkarni and J. Fu are with the Dept. of Electrical and Computer Engineering, University of Florida, Gainesville, FL 32603 USA.
{\tt\small \{a.kulkarni2,fujie\}@ufl.edu}}
}
\newif\ifuseboldmathops
\newif\ifuseittextabbrevs
	\newcommand{\ie}{{\it i.e.}}
	\newcommand{\ie}{i.e.}
\newcommand{\Eventually}{\Diamond \, }
\newcommand{\Next}{\bigcirc \, }
\newcommand{\until}{\mbox{$\, {\sf U}\,$}}
\newcommand{\calF}{\mathcal{F}}
\newcommand{\calAP}{\mathcal{AP}}
\newcommand{\calPA}{\mathcal{B}}
\newcommand{\init}{{q_0}}
\acrodef{mdp}[MDP]{Markov Decision Process}
\acrodef{pomdp}[POMDP]{Partially Observable Markov Decision Process}
\acrodef{ltl}[LTL]{Linear Temporal Logic}
\acrodef{dfa}[DFA]{Deterministic Finite Automaton}
\theoremstyle{definition}
 \newtheorem{definition}{Definition}
 \newtheorem{example}{Example}
\newtheorem{lemma}{Lemma}
\newtheorem{remark}{Remark}
\newtheorem{theorem}{Theorem}
\newcommand{\calA}{\mathcal{A}}
\newcommand{\Mod}{\mathsf{Mod}}
\newcommand{\augnodes}{\mathcal{W}}
\newcommand{\augnode}{W}
\newcommand{\augedges}{\mathcal{E}}
\newcommand{\asw}{\mathsf{ASWin}}
\acrodef{smdp}[Semi-MDP]{Semi-Markov decision process}
\acrodef{mcts}[MCTS]{Monte Carlo tree search}
\acrodef{uct}[UCT]{Upper Confidence Bound 1 applied to trees}
\acrodef{scltl}[scLTL]{syntactically co-safe LTL}
\acrodef{ssp}[SSP]{Stochastic Shortest Path}
\acrodef{p2sg}[SG(2)]{Two-player Stochastic Game}
\acrodef{mc}[MC]{Markov chain}
\acrodef{prefltl}[TPL]{ Temporal Preference Logic}
\acrodef{tld}[TLwD]{Temporal Logic with Distributions}
\acrodef{mtl}[Metric TL]{Metric Temporal Logic}
\acrodef{sta}[STA]{Stochastic Timed Automaton}
\newcommand{\dist}{\mathcal{D}}
\renewcommand{\Pr}{\mathbf{Pr}}
\newcommand{\calM}{\mathcal{M}}
\acrodef{gpf}[GPF]{generalized preference formula}
\acrodef{cp}[CP]{ceteris paribus}
\acrodef{milp}[MILP]{Mixed-Integer Linear Programming}
\acrodef{dfa}[DFA]{Deterministic Finite Automaton}
\newcommand{\weakpref}{\trianglerighteq}
\newcommand{\prefnodes}{\mathcal{X}}
\newcommand{\prefedges}{E}
\newcommand{\outcomes}{\mathsf{Outcomes}}
\newcommand{\prefix}{\mathsf{Pref}}
\begin{document}

\maketitle
\thispagestyle{empty}
\pagestyle{empty}
% \thispagestyle{plain}
% \pagestyle{plain}

%%%%%%%%%%%%%%%%%%%%%%%%%%%%%%%%%%%%%%%%%%%%%%%%%%%%%%%%%%%%%%%%%%%%%%%%%%%%%%%%
\begin{abstract}
Preferences play a key role in determining what goals/constraints to satisfy when not all constraints can be satisfied simultaneously. In this work, we study preference-based planning in a stochastic system modeled as a Markov decision process, subject to a possible incomplete preference over temporally extended goals. Our contributions are three folds: First, we introduce a preference language  to specify  preferences over temporally extended goals. Second, we define a novel automata-theoretic model to represent the preorder induced by given preference relation.
The automata representation of preferences enables us to  develop a preference-based planning algorithm for stochastic systems.
Finally, we   show how to synthesize opportunistic strategies that achieves an outcome that improves upon the current satisfiable outcome, with positive probability or with probability one, in a stochastic system. We illustrate our solution approaches using a robot motion planning example. 

\end{abstract}

%%%%%%%%%%%%%%%%%%%%%%%%%%%%%%%%%%%%%%%%%%%%%%%%%%%%%%%%%%%%%%%%%%%%%%%%%%%%%%%%
\section{Introduction}

Preference-based planning decides what constraints
to satisfy when not all constraints can be achieved
\cite{hastie2010rational}. 
In this paper, we study a class of 
qualitative, preference-based probabilistic planning problem in which the agent aims to strategically exploit the \emph{opportunities} that arise due to stochasticity in its environment to achieve a more preferred outcome than what may be achieved from its initial state. 
% achieve as highly a desirable outcome as possible given  (incomplete) preferences over temporally extended goals and all the uncertainties in its stochastic environment. 
Such problems are encountered in many applications of autonomous systems.

In existing methods for probabilistic planning with temporal goals, the desired behavior of the system is specified by a temporal logic formula \cite{manna2012temporal}, and the goal is to compute a policy that either maximizes the probability of satisfying the formula \cite{ding2011mdp,hasanbeig2019reinforcement}, or enforces the satisfaction as a constraint  \cite{Lacerda2014,wen2015correct}. In recent work, preference-based planning with temporal logic objectives have been studied: minimum violation planning in a deterministic system \cite{tumova2013least} decides which low-priority constraints to be violated.  
Automated specification-revision is proposed in \cite{Lahijanian2016} where the formula can be revised with a cost and the planning problem is formulated into a multi-objective \ac{mdp} that trades off minimizing the cost of revision and maximizing the probability of satisfying the revised formula.  \cite{mehdipourSpecifyingUserPreferences2021} introduced weights  with Boolean and temporal operators in signal temporal logic to specify the importance of satisfying the subformula and priority in the timing of satisfaction.  They developed a gradient-based optimization method to maximize the weighted satisfaction in deterministic dynamical systems. 
Robust and recovery specifications are introduced by \cite{bloem2019synthesizing} and pre-specify what behaviors are expected when the part of the system specification (\ie,  the environment assumption) fails to be satisfied.  Existing  preference-based planning methods with temporal goals assume the preference relation to be \emph{complete}.  

Unfortunately, in many  applications, the completeness assumption does not always hold. For instance, it can be impractical to elicit user's preference between every pair of outcomes when the set of outcomes is large; or in some situation, such as the trolley problem \cite{thomson1976killing}, the outcomes (sacrificing passengers or pedestrians) are   incomparable. 
Preference languages have been proposed to represent both the complete and incomplete preferences over propositional formulas  \cite{van2005preference} and temporal logic formulas \cite{bienvenuPreferenceLogicsPreference2010}. For planning, CP-net and its variants \cite{santhanam2016representing,boutilier2004cp} have been proposed as a computational model. But they are defined over propositional preferences. To the best of our knowledge, there is no computational model that can express incomplete preferences over temporal goals. Such a model is needed to facilitate planning in stochastic environments. 

% Unfortunately, in many  applications, the completeness assumption does not always hold. There are three common reasons for this. First, as in the case of recommender systems \cite{todo}, it is impractical to elicit the user's preference between every pair of outcomes when the set of outcomes is large. Second, as in the case of the trolley problem \cite{todo}, the outcomes (sacrificing passengers or pedestrians) are fundamentally incomparable. Third, as in the case of unexpected failure during Apollo 11 lunar landing \cite{todo}, making a decision is \emph{inescapable}. \jf{I don't understand why the third is related to incomparable.}
% In such situations, we need a framework to make reasonable decisions given incomplete preferences---that is, when there exists a pair of outcomes for which the preference between them is unspecified. 

In this paper, we propose a novel automata-theoretic approach to qualitative planning in \ac{mdp}s with incomplete preferences over temporal logic objectives. 
Our approach consists of three steps. First, we express (incomplete) preferences over the satisfaction of  temporal goals specified using a fragment of \ac{ltl}.
Unlike propositional preferences that are interpreted over states, preferences over temporal goals are interpreted over infinite words. Second, we define an \emph{automata-theoretic model} to represent the preorder induced by the preference relation and describe a procedure to construct the automata-theoretic model given a preference formula. 
Thirdly, we present an algorithm to solve  preference-based strategies in a stochastic system modeled as a labeled \ac{mdp}. We presented   \emph{safe and positively improving} and \emph{safe and almost-surely improving} strategies, that identify and exploit opportunities for improvements with positive probability and probability one, respectively. A running example is employed to illustrate the notions and solution approaches.
%Moreover, in a stochastic system, choosing an action at any state defines a distribution over words, and thereby, over outcomes that may be achieved from that state. We introduce the notion of most-preferred satisfied outcomes of a word to compare between actions at a state. Second,  Finally, we show that the task of identifying and exploiting opportunities can be reduced to that of a simple reachability game over an \ac{mdp}. For this purpose, we describe the construction of an improvement \ac{mdp} and define two notions: namely, \emph{safe and positively improving} and \emph{safe and almost-surely improving} strategies, that identify and exploit opportunities with positive probability and probability one, respectively. 

% In which follows, Section~\ref{sec:prelim} introduces necessary preliminaries on the system modeling, temporal logic, and automata. Section~\ref{sec:language} defines a class of preference languages over temporal goals. Section~\ref{sec:computational-model} provides an algorithm to construct the computational model from a preference formula, and section~\ref{sec:planning} presents the competitiveness principle and the proposed planning algorithm. Section~\ref{sec:case-study} illustrates the method with an example, and Section~\ref{sec:conclude} concludes.
% \aknote{needs change. No competitiveness, no case-study.}

\section{Preliminaries}
	\label{sec:prelim}

% throughout the paper and the model of the stochastic system. 

\textbf{Notation.} Given a finite set $X$, let $\dist(X)$ be the set of probability distributions over $X$. 
% For a given distribution $d\in \dist(X)$, the support of $d$ is denoted $\supp(d)=\{x\mid d(x)>0\}$. 
Let $\Sigma$ be an alphabet (a finite set of symbols). We denote the set of finite (resp., infinite) words that can be generated using $\Sigma$ by $\Sigma^\ast$ (resp., $\Sigma^\omega$). 
% The set of finite words (also known as the Kleene closure of $\Sigma$) is denoted by $\Sigma^\ast$, and the set of infinite words that can be generated using $\Sigma$ by $\Sigma^\omega$. 
Given a word $w\in \Sigma^\omega$, a prefix of $w$ is a word $u \in \Sigma^\ast$ such that there exists $v \in \Sigma^\omega$, $w=uv$.
We denote the set of all finite prefixes of $w$ by $\prefix(w)$. 
% Slightly abusing the notation, given a subset of words $J \subseteq \Sigma^\ast$, we denote the set of all finite prefixes of all words in $J$ by $\prefix(J) := \{u \in \Sigma^\ast \mid \exists w \in J: u \in \prefix(w)\}$. 
% Given a set $X$, let $\indicator_X$ be the indicator function with $\indicator_X(x) =1, \text{ if } x \in X$ and $0$ otherwise. 

We consider a class of decision-making problems in  stochastic systems modeled as a labeled \ac{mdp} \cite{baier2008principles}. 
% \begin{definition}[Labeled \ac{mdp}]
% 	\label{def:labeled_mdp}
% 	A labeled \ac{mdp} is a tuple $M = \langle S, A, P, \nu_0, \calAP, L \rangle,$ where $S$ and $A$ are finite state and action sets, 
% 	$\nu_0 \in \dist(S)$ is the initial state distribution, 
% 	$P: S \times A \rightarrow \dist(S)$ is the transition probability function such that $P(s' \mid s, a)$ is the probability of reaching $s' \in S$ given that action $a \in A$ is chosen at state $s \in S$, $\calAP$ is a finite set of atomic propositions, and $L: S \rightarrow 2^{\calAP}$ is a labeling function that maps each state to a set of atomic propositions which are true in that state.
% \end{definition}

\begin{definition}[Labeled \ac{mdp}]
	\label{def:labeled_mdp}
	A labeled \ac{mdp} is a tuple $M = \langle S, A, P, \calAP, L \rangle,$ where $S$ and $A$ are finite state and action sets, $P: S \times A \rightarrow \dist(S)$ is the transition probability function such that $P(s' \mid s, a)$ is the probability of reaching $s' \in S$ given that action $a \in A$ is chosen at state $s \in S$, $\calAP$ is a finite set of atomic propositions, and $L: S \rightarrow 2^{\calAP}$ is a labeling function that maps each state to a set of atomic propositions which are true in that state.
\end{definition}

A finite-memory, randomized strategy in the \ac{mdp} is a function $\pi : S^\ast \rightarrow \dist(A)$. A Markovian, randomized strategy in the \ac{mdp} is a function $\pi : S \rightarrow \dist(A)$. Given an \ac{mdp} $M$ and an initial distribution $\nu_0$, a   strategy $\pi$ induces a stochastic process $M_\pi = \{S_t \mid t \geq 1\}$ where $S_k$ is the random variable for the $k$-th state in the stochastic process $M_\pi$ and it holds that $S_0 \sim \nu_0$ and $S_{i+1} \sim P(\cdot \mid S_i, a_i)$ and $a_i \sim \pi(\cdot \mid S_0\ldots S_i)$ for   $i \geq 0$.

We express the objective of the planning agent as preferences over a set of outcomes, each of which is   expressed by a \ac{scltl} formula \cite{kupferman2001model}.

\begin{definition}%[\ac{scltl} \cite{kupferman2001model}] 
	Given a set of atomic propositions $\calAP$, an \ac{scltl} formula is defined inductively as follows:
	\[
	\varphi \coloneqq p \mid \neg p \mid \varphi \land \varphi \mid  \Next \varphi \mid \varphi \until \varphi,
	\]
	where $p \in \calAP$ is an atomic proposition. The operators $\neg$ (negation) and $\land$ (and) are propositional logic operators. The operators $\Next$ (next) and $\until$ (until) are temporal operators \cite{kupferman2001model}. The operator $\Eventually$ (eventually) is derived using $\until$ as follows: $\Eventually \varphi = \top \until \varphi$ where $\top$ is unconditionally true. The formula $\Eventually \varphi$ is true if $\varphi$ holds in some future time.
	
% 	The temporal operator $\Next$ is called the ``next'' operator. The formula $\Next \varphi$ means that the formula $\varphi$ will be true in the next state. The temporal operator $\until$ is called the ``until'' operator. The formula $\varphi_1 \until \varphi_2$ means that $\varphi_2$ will become true in some future time steps, and before that $\varphi_1$ holds true for every time step. Using the until operator, we define an operator $\Eventually$ (read as eventually) as follows: $\Eventually \varphi = \top \until \varphi$ where $\top$ is unconditionally true. The formula $\Eventually \varphi$ is true if $\varphi$ holds in some future time.
\end{definition}

The \ac{scltl} formulas are a subclass of \ac{ltl} formulas with a special property that an infinite word satisfying an \ac{scltl} only needs to have a `good' prefix (formalized after Definition~\ref{def:dfa}). The set of good prefixes can be compactly represented as the language accepted by a \ac{dfa}.  

\begin{definition}
	\label{def:dfa}
	A deterministic finite automaton (DFA) is a tuple 
$\calA = \langle Q, \Sigma, \delta, \init, F \rangle,$
	where $Q$ is a finite set of states; $\Sigma = 2^\calAP$ is a finite set of symbols called the alphabet; $\delta: Q \times \Sigma \rightarrow Q$ is a deterministic transition function that maps a state and a symbol to a next state. The transition function is extended recursively over words as follows: $\delta(q,\sigma u)=\delta(\delta(q, \sigma),u )$ given $\sigma \in \Sigma$ and $u \in \Sigma^\ast$; $\init \in Q$ is the initial state; $F \subseteq Q$ is a set of accepting states. A word $u$ is accepted by $\calA$ if $\delta(\init, u)\in F$. 
\end{definition}

Given an \ac{scltl} formula $\varphi$ and an infinite word $w \in \Sigma^\omega$, a `good' prefix is a finite word $u \in \Sigma^\ast$ such that $u \in \prefix(w)$ and $u$ is accepted by the \ac{dfa}, $\calA$. A   word $w \in \Sigma^\omega$ satisfies an \ac{scltl} formula $\varphi$, denoted by $w \models \varphi$, if $w$ has a good prefix.  The set of words satisfying an \ac{scltl} formula $\varphi$ is denoted by $\Mod(\varphi) = \{w \in \Sigma^\omega \mid w \models \varphi \}$. For an \ac{scltl} formula,  all accepting states of its corresponding \ac{dfa} are absorbing, \ie, $\delta(q, \sigma) = q$ for any $q \in F$ and $\sigma \in \Sigma$. We assume the transition function of \ac{dfa} to be \emph{complete}. That is, $\delta(q,\sigma)$ is defined for any pair $(q, \sigma) \in Q \times \Sigma$. An incomplete transition function can be made complete by introducing a sink state and redirecting all undefined transitions to that sink state.

An infinite path in a labeled \ac{mdp} $\rho=s_0 s_1 \ldots$ induces a word $w = L(s_0)L(s_1)\ldots$ in the \ac{dfa}.
We say the path $\rho$ satisfies an \ac{scltl} formula $\varphi$ if and only if the induced word $w$ satisfies the formula, \ie, $w \models \varphi$. 
% Slightly abusing the notation, we denote a path $\rho$ in \ac{mdp} satisfying $\varphi$ by $\rho \models \varphi$. 

\begin{definition}[Almost-Sure/Positive Winning Strategy]
	Given an \ac{mdp} $M$ and an \ac{scltl} formula $\varphi$, a  strategy $\pi: S^\ast \rightarrow \dist(A)$ is said to be almost-sure (resp., positive) winning if, in the stochastic process $M_\pi$  induced by $\pi$, the formula $\varphi$ can be satisfied 
	with probability one (resp. with a probability $\ge 0$). Formally, in the stochastic process $M_\pi=\{S_t\mid t\ge 1\}$, $\Pr(S_0S_1,\ldots \models \varphi) =1$ (resp. $>0$).
\end{definition}

The set of \emph{states} in the \ac{mdp} $M$, starting from which the agent has an almost-sure (resp. positive) winning strategy to satisfy an \ac{scltl} formula $\varphi$ is called the \emph{almost-sure (resp., positive) winning region}. Given an \ac{mdp} and an \ac{scltl} formula, the product operation \cite{baierControllerSynthesisProbabilistic2004} reduces the problem of computing almost-sure (resp., positive) winning region to that of computing the set of states from which a subset of final states in the product \ac{mdp} can be reached with almost-surely (resp., positive probability). It is known that there exists a \emph{memoryless, almost-sure winning strategy} $\pi$ to ensure the subset of final states is reached with probability one from a state in the almost-sure winning region. Polynomial (resp., linear) time algorithm to compute almost-sure (resp., positive) winning strategy in \ac{mdp}s with reachability objectives can be found in the book by \cite[Chap. 10]{baier2008principles}.

\subsection{Running Example}
\label{sec:running-example}

We use a motion planning problem for an cleaning robot to illustrate the the concepts discussed in this paper. The robot is to operate in a $5 \times 5$ stochastic gridworld as shown in Figure~\ref{fig:gridworld}. At every step, the robot must choose to move in one of the \texttt{North, East, South, West} directions. If the action results in an obstacle cell (shown in black), the robot returns to the cell it started from. If the robot enters a cell marked with green arrows, it may either stay in that cell or move into an adjacent cell along a direction indicated by the arrows each with a positive probability. If the robot moves into any cell with no arrows, it remains in that cell with probability one. The robot has a limited battery capacity measured in units. Every action costs $1$ unit of battery. We consider two preferences objectives for the robot. 

\begin{figure}
	\centering
	\includegraphics[scale=0.275]{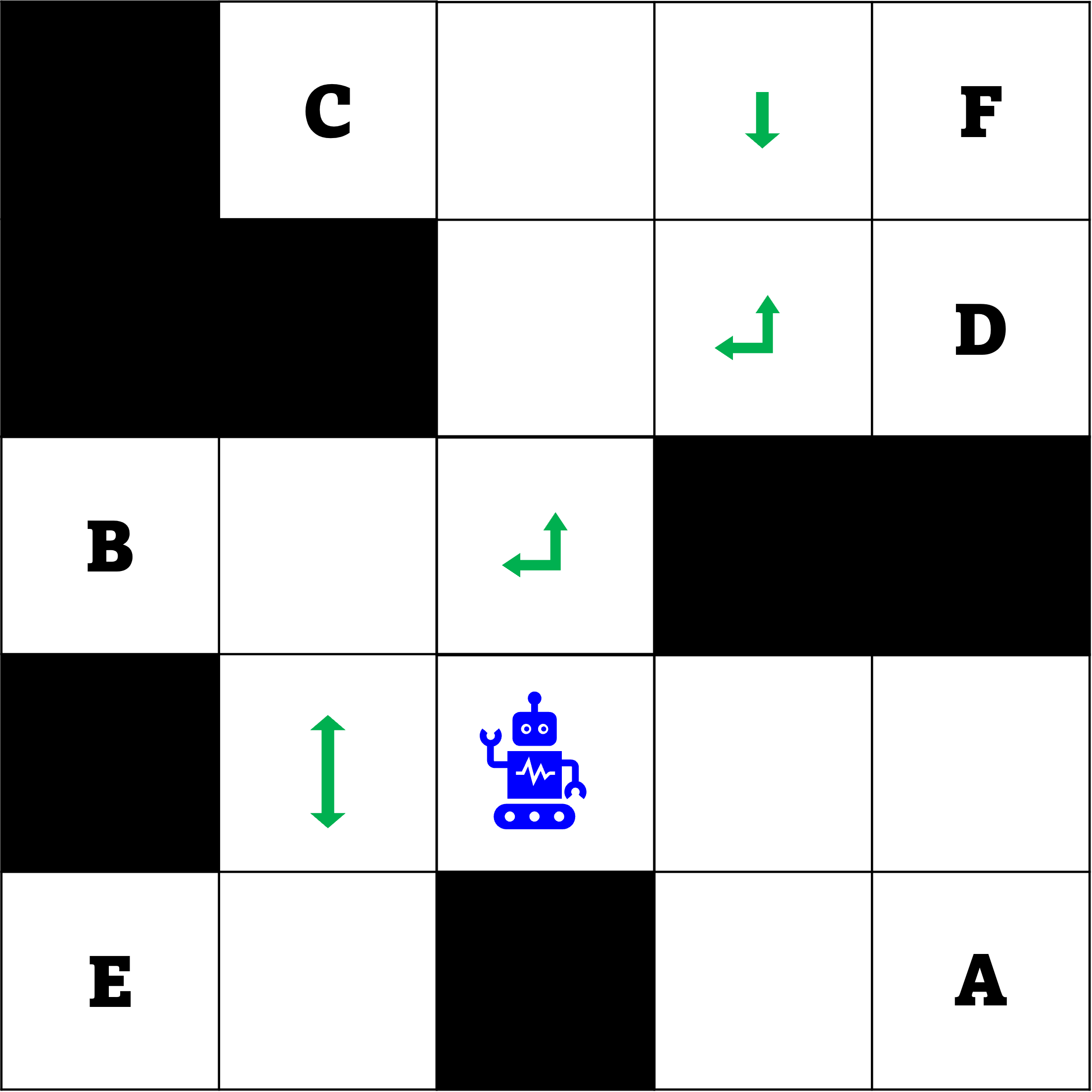}	%\vspace{-2ex}

	\caption{A gridworld \ac{mdp} with $6$ regions of interest $A$-$F$.}
	\label{fig:gridworld}
% 	\vspace{-3ex}
\end{figure}

\begin{enumerate}
    \item[(PO1)] The robot must visit $A, B$ and/or $E$, given the preference that: visiting $B$ is strictly preferred to visiting $A$, and visiting $E$ is strictly preferred to visiting $A$. 
    
    \item[(PO2)] The robot must visit exactly one of $A, B, C, D$ or $F$, given the preference that: visiting $B$ is strictly preferred to visiting $A$, 
    % visiting $E$ is strictly preferred to visiting $A$, 
    visiting $D$ is strictly preferred to visiting $B$, visiting $F$ is strictly preferred to visiting $C$, and visiting $B$ is indifferent to visiting $C$. 
\end{enumerate}

The preference relations expressed by both the objectives are incomplete. In the first objective, neither the relation between $B$ and $E$ is given nor can it be deduced using the properties (e.g., transitivity) of preferences. Hence, visiting $B$ and visiting $E$ are incomparable outcomes due to incompletely known preferences.

In the second objective, since $B$ and $C$ are indifferent, it follows by transitivity that visiting $C$ is strictly preferred to visiting $A$, and visiting $D$ is strictly preferred to visiting $C$. However, visiting $D$ is incomparable to visiting $F$ since no relation is either given or can be deduced between them.

\section{Preference Modeling} 
\label{sec:language}

In this section, we propose a language to compactly represent incomplete preferences over temporal goals. 

% We propose a language to specify preferences over temporal goals and a method to construct a preference specification into a computational automata-theoretic model.

Let $\Phi = \{\varphi_1, \ldots, \varphi_n\}$ be an indexed set of outcomes, \ie, temporal goals expressed by \ac{scltl} formulas.

\begin{definition}
    A preference   on $\Phi$ is a reflexive binary relation $\weakpref$ on $\Phi$. For any $1 \leq i, j \leq n$, a pair of outcomes $(\varphi_i, \varphi_j) \in~\weakpref$ means that satisfying $\varphi_i$ is considered ``at least as good as'' satisfying $\varphi_j$. 
\end{definition}

We also denote $(\varphi_i, \varphi_j) \in~\weakpref$ by $\varphi_i \weakpref \varphi_j$. Given any pair of outcomes, $\varphi_i, \varphi_j \in \Phi$, exactly one of the following four relations holds: 
\begin{enumerate}
    \item $\varphi_i$ is \emph{indifferent} to $\varphi_j$: $\varphi_i \weakpref \varphi_j$ and $\varphi_j \weakpref \varphi_i$,
        
    \item $\varphi_i$ is \emph{strictly preferred} to $\varphi_j$: $\varphi_i \weakpref \varphi_j$ and $\varphi_j \not\weakpref \varphi_i$,

    \item $\varphi_j$ is \emph{strictly preferred} to $\varphi_i$: $\varphi_j \weakpref \varphi_i$ and $\varphi_i \not\weakpref \varphi_j$,
    
    \item $\varphi_i$ is \emph{incomparable} to $\varphi_j$: $\varphi_i \not\weakpref \varphi_2$ and $\varphi_j \not\weakpref \varphi_i$.
\end{enumerate}

% To simplify the semantics and the construction of a computational model of preference, the following assumptions are made.
% \begin{assumption}
% \label{assume:language-independence}
%     Given any two outcomes $\varphi_i, \varphi_j \in \Phi$, $\varphi_i \weakpref \varphi_j$ and $\varphi_j \weakpref \varphi_i$ do not hold simultaneously, \ie, no two outcomes in $\Phi$ are  indifferent to each other.
% \end{assumption}
% Note that the assumption is not restrictive. 

When the agent is indifferent to two outcomes $\varphi_i, \varphi_j$, it may choose to satisfy either one of them. This can equivalently be expressed in \ac{scltl} by the disjunction of the two formulas. Based on this observation, we hereby assume that for any two outcomes $\varphi_i, \varphi_j \in \Phi$, $\varphi_i \weakpref \varphi_j$ and $\varphi_j \weakpref \varphi_i$ do not hold simultaneously, \ie, no two outcomes in $\Phi$ are indifferent to each other.
%Hence, each pair of \ac{scltl} formulas in $\Phi$ that are indifferent to each other can be replaced by their disjunction. 
% 
% 
% As a result of Assumption~\ref{assume:language-independence}, 
As a result, the binary relation $\weakpref$ on $\Phi$ can equivalently be expressed using the two sets $P, J \subseteq \Phi \times \Phi$ constructed as follows: given a pair of outcomes $\varphi_i, \varphi_j \in \Phi$, $1 \leq i, j, \leq n$,
\begin{enumerate}
    \item $(\varphi_i, \varphi_j) \in P$ iff $\varphi_i$ is strictly preferred to $\varphi_j$, 
    
    \item $(\varphi_i, \varphi_j) \in J$ iff $\varphi_i$ is incomparable to $\varphi_j$.
\end{enumerate}

% \jf{highlight why this is different from literature where they define P, I, J. and we only use 2.}  
\begin{remark}
    We closely follow the notation in \cite[Ch. 2]{bouyssou2009concepts}. In contrast, we use the properties of \ac{scltl} formulas to simplify the notation to avoid expressing indifference explicitly.  
\end{remark}

%  \jf{A quick question, does transitivity automatically be carried over or you need to include the pair? For example, $(1,2), (2,3)$ in P is not sufficient, you should have $(1,2),(2,3), (1,3)$ in P?}

Notice that the sets $P, J$ induce a mutually exclusive and exhaustive partition of $\Phi \times \Phi$. Let $P^- = \{(\varphi_j, \varphi_i) \in \Phi \times \Phi \mid (\varphi_i, \varphi_j) \in P\}$. Then, $P \cup P^- \cup J = \Phi \times \Phi$ and $P \cap P^- = P^- \cap J = J \cap P = \emptyset$. %Moreover, since $\weakpref$ is reflexive, by construction, $I$ is reflexive and $J$ is irreflexive \cite{bouyssou2009concepts}. 

\begin{example}
    Consider the running example introduced in Sect.~\ref{sec:running-example}. In preference objective (PO1), since there is no constraint on visiting multiple regions of interests, each outcome can be represented using ``eventually'' operator. Hence, the set of outcomes is given by $\Phi = \{\Eventually A, \Eventually B, \Eventually E\}$. The components of preference structure $\langle P, J \rangle$ are given as follows: $P = \{(\Eventually B, \Eventually A), (\Eventually E, \Eventually A)\}$, and $J = \{(\Eventually B, \Eventually E), (\Eventually E, \Eventually B)\}$. 
    
    % (PO1) The \ac{scltl} formulas that represent the three outcomes of visiting $A, B, E$ is given by $\Phi = \{\Eventually A, \Eventually B, \Eventually E\}$. It is noted that the objective $\Eventually A$ does not impose any restriction on visiting $B, E$. Given the preferences, we can construct the preference structure $\langle P, J \rangle$ as follows:
    % \begin{subequations}
    % \label{eq:po1-pj}
    % \begin{align}
    %     P &= \{(\Eventually B, \Eventually A), (\Eventually E, \Eventually A)\} \label{eq:obj1-P}\\
    %     J &= \{(\Eventually B, \Eventually E), (\Eventually E, \Eventually B)\} \label{eq:obj1-R}
    % \end{align}
    % \end{subequations}
    
    In preference objective (PO2), since exactly one region is to be visited, the outcomes can be represented as \ac{scltl} formulas: $\varphi_A = \neg (B \lor C \lor D \lor F) \until A$, $\varphi_B = \neg (A \lor C \lor D \lor F) \until B$, and so on. Because of the indifference, we replace $\varphi_B$ and $\varphi_C$ by their disjunction, $\varphi_B \lor \varphi_C$. Hence, the set of outcomes is $\Phi = \{\varphi_A, \varphi_B \lor \varphi_C, \varphi_D, \varphi_F\}$. And, the components of preference structure are given by: $P = \{(\varphi_B \lor \varphi_C, \varphi_A), (\varphi_D, \varphi_B \lor \varphi_C), (\varphi_F, \varphi_B \lor \varphi_C), (\varphi_D, \varphi_A), (\varphi_F, \varphi_A)\}$, and $J = \{(\varphi_F, \varphi_D), (\varphi_D, \varphi_F)$. 
    
    % (PO2) Given that exactly one region is to be visited, the six \ac{scltl} formulas be given by $\varphi_A = \neg (B \lor C \lor D \lor F) \until A$, $\varphi_B = \neg (A \lor C \lor D \lor F) \until B$, and so on. In this case, $\Phi = \{\varphi_A, \varphi_B \lor \varphi_C, \varphi_D, \varphi_F\}$ consists of four \ac{scltl} formulas because visiting $B, C$ is indifferent. Given the preferences, we can construct the preference structure $\langle P, J \rangle$ as follows:
    % \begin{subequations}
    % \label{eq:po2-pj}
    % \begin{align}
    %     P = \{&(\varphi_B \lor \varphi_C, \varphi_A), (\varphi_D, \varphi_B \lor \varphi_C), (\varphi_F, \varphi_B \lor \varphi_C)\} \\
    %     J = \{& (\varphi_F, \varphi_D), (\varphi_D, \varphi_F) 
    %     \}
    % \end{align}
    % \end{subequations}
\end{example}

Because an \ac{scltl} formula is interpreted over infinite words, the preference structure $\weakpref$ induces a preference structure $\succeq$ on the set of infinite words in $\Sigma^\omega$.  
% \jf{I don't think you need to repeat it here. the semantics itself would be good. DELETE THE FOLLOWING} Therefore, given any two words, $w_1, w_2 \in \Sigma^\omega$, one of the following conditions would hold: \ak{the aim is to highlight that even though $\Phi$ does not have indifferent formula, words can still be indifferent. @JF: any suggestions how to state this without enumerating 4 cases again?}
% \begin{inparaenum}
%     \item $w_1$ is \emph{indifferent to} $w_2$: $w_1 \succeq w_2$ and $w_2 \succeq w_1$;
%     \item $w_1$ is \emph{strictly preferred to} $w_2$: $w_1 \succeq w_2$ and $w_2 \not\succeq w_1$; 
%     \item $w_2$ is \emph{strictly preferred to} $w_1$: $w_1 \not\succeq w_2$ and $w_2 \succeq w_1$; 
%     \item $w_1$ is \emph{incomparable to} $w_2$: $w_1 \not\succeq w_2$ and $w_2 \not\succeq w_1$. 
% \end{inparaenum}
% 
% 
Therefore, we can define a pre-order $\succeq \in \Sigma^\omega \times \Sigma^\omega$ based on the preference structure $\weakpref$ (and equivalently to the tuple $\langle P, J\rangle$). This is a non-trivial task  because any word in $\Sigma^\omega$ could satisfy more than one of the \ac{scltl} formulas in $\Phi$. Thus, to determine whether a word is strictly preferred over another, we need a way to compare two arbitrary subsets of $\Phi$ that contain outcomes satisfied by these two words. 
% This poses a significant challenge to defining the semantics of $\succeq$. This is not the case when outcomes are defined using propositional statements. 
% To deal with this issue, we follow the \emph{Ceteris Paribus} interpretation of preferences \cite{todo}. \ak{elaborate on CP.}

\begin{definition}[Most-Preferred Satisfied Outcomes]
	Given a word $w \in \Sigma^\omega$, let $\outcomes(w) = \{\varphi \in \Phi \mid w \models \varphi\}$ be the set of outcomes satisfied by $w$. 
	Given a subset $\Psi\subseteq \Phi$, let $\mathsf{MP}(\Psi) =\{\varphi \in \Psi\mid \nexists \varphi'\in \Psi: (\varphi, \varphi') \in P\}$ and 
	let $\mathsf{MP}(w) = \mathsf{MP}(\outcomes(w))$ %\{\varphi \in \Mod(w) \mid \nexists \varphi' \in \Mod(w): (\varphi, \varphi') \in P\}$ 
	be the set of \emph{most-preferred outcomes} satisfied by the word $w$. 
\end{definition}

% As an example, consider a word $w \in \Sigma^\omega$ that satisfies three \ac{scltl} formulas, $\varphi_1, \varphi_2$ and $\varphi_3$. Suppose that $\varphi_2, \varphi_3$ are individually strictly preferred to $\varphi_1$ and $\varphi_2$ is incomparable to $\varphi_3$. That is, $(\varphi_2, \varphi_1), (\varphi_3, \varphi_1) \in P$ and $(\varphi_2, \varphi_3) \in J$. Then, $\Mod(w) = \{\varphi_1, \varphi_2, \varphi_3\}$ and $\mathsf{MP}(w) = \{\varphi_2, \varphi_3\}$. 

\begin{lemma}
Given a word $w\in \Sigma^\omega$, 	any pair  $\varphi, \varphi'\in \mathsf{MP}(w)$ are incomparable to each other. 
\end{lemma}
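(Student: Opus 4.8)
The plan is to prove the statement by directly unfolding the definition of $\mathsf{MP}(w)$ together with the trichotomy among outcomes recorded by the pair $\langle P, J\rangle$, and then closing with a one-line argument by contradiction. No properties of $\Sigma^\omega$, of \ac{scltl} satisfaction, or of transitivity of $\weakpref$ are needed; the whole content is combinatorial.

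First I would fix two distinct outcomes $\varphi, \varphi' \in \mathsf{MP}(w)$ --- the case $\varphi = \varphi'$ being vacuous, since ``incomparable'' only makes sense for a genuine pair. By definition $\mathsf{MP}(w) = \mathsf{MP}(\outcomes(w)) \subseteq \outcomes(w)$, so $w \models \varphi$ and $w \models \varphi'$, i.e., both belong to $\outcomes(w)$. Next I would invoke the structural facts established earlier: no two outcomes in $\Phi$ are indifferent, and $P \cup P^- \cup J = \Phi \times \Phi$ with $P, P^-, J$ pairwise disjoint. Hence exactly one of the three alternatives holds: $(\varphi,\varphi') \in P$, $(\varphi',\varphi) \in P$, or $(\varphi,\varphi') \in J$. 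It therefore suffices to exclude the first two.

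This I would do by contradiction, using that $\mathsf{MP}(\Psi)$ is an antichain for the strict relation $P$ inside $\Psi$: if $\varphi \in \mathsf{MP}(\Psi)$ then no element of $\Psi$ is $P$-related to $\varphi$ in the direction forbidden by the definition. So if $(\varphi,\varphi') \in P$ or $(\varphi',\varphi) \in P$, then --- since $\varphi,\varphi'$ both lie in $\outcomes(w)$ --- one of them fails the membership condition for $\mathsf{MP}(\outcomes(w))$, contradicting $\{\varphi,\varphi'\} \subseteq \mathsf{MP}(w)$. The only surviving alternative is $(\varphi,\varphi') \in J$, which is precisely the claim that $\varphi$ and $\varphi'$ are incomparable. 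I do not expect a real obstacle here; the only care needed is bookkeeping --- matching the orientation of the ordered pair in the definition of $\mathsf{MP}$ against the orientation of $P$, and using exhaustiveness of $P \cup P^- \cup J = \Phi\times\Phi$ together with the no-indifference assumption, so that ``$P$-unrelated in both directions'' genuinely collapses to ``incomparable.''
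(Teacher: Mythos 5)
Your proof is correct and matches the paper's approach: the paper simply states that the lemma ``follows from the definition,'' and your argument is exactly that unfolding --- both outcomes lie in $\outcomes(w)$, membership in $\mathsf{MP}(\outcomes(w))$ rules out a $P$-relation in either direction, and exhaustiveness of $P \cup P^- \cup J$ together with the no-indifference assumption forces incomparability.
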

The proof follows from the definition.

\begin{definition}[Semantics]
\label{def:semantics}
    Given two words $w_1, w_2 \in \Sigma^\omega$, $w_1$ is strictly preferred to $w_2$, denoted $w_1\succ w_2$,  if and only if the following conditions hold:
    \begin{inparaenum}
        \item there exist $\varphi_i \in \mathsf{MP}(w_1)$ and $\varphi_j \in \mathsf{MP}(w_2)$ such that $(\varphi_i, \varphi_j) \in P$, and
        \item for every pair $\varphi_i \in \mathsf{MP}(w_1)$ and $\varphi_j \in \mathsf{MP}(w_2)$, $(\varphi_j, \varphi_i) \notin P$.
    \end{inparaenum}
    Word  $w_1$ is indifferent to $w_2$, denoted $w_1\sim w_2$,  if and only if $\mathsf{MP}(w_1) = \mathsf{MP}(w_2)$. Two words $w_1$ and $w_2$ are incomparable, denoted $w_1 \| w_2$, if neither $w_1 \succ w_2$, nor $w_2 \succ w_1$, nor $w_1 \sim w_2$ holds. 
    % none of the following cases hold: 1) $w_1$ is preferred to $w_2$, 2) $w_2$ is preferred to $w_1$, 3) $w_1$ and $w_2$ are indifferent.
\end{definition}
% \aknote{add comment on assume 1, cond. 1.}

In words, $w_1$ is strictly preferred to $w_2$ iff: first, $w_1$ satisfies at least one \ac{scltl} formula that is strictly preferred to some \ac{scltl} formula satisfied by $w_2$. Second, every \ac{scltl} formula satisfied by $w_1$ is either strictly preferred to, or incomparable to any \ac{scltl} formula satisfied by $w_2$.

\begin{example}
    Consider preference objective (PO2). 
    % Let $\rho$ be a path in Fig.~\ref{fig:gridworld} that first visits $A$ then $B$ and then $E$. For the word $w = L(\rho) \in \Sigma^\omega$ induced by path $\rho$, we have $\outcomes(w) = \{\varphi_A, \varphi_B, \varphi_E\}$ and $\mathsf{MP}(w) = \{\varphi_B, \varphi_E\}$ since visiting both $B, E$ is strictly preferred to $A$, and visiting $B$ and visiting $E$ are incomparable.
    % 
    Consider two paths $\rho_1, \rho_2$   in Fig.~\ref{fig:gridworld} that sequentially visit  $A, F, D$ and $A, C$, respectively. Let $w_1 = L(\rho_1)$, $w_2 = L(\rho_2)$ be the words induced by $\rho_1, \rho_2$, respectively. For the word $w_1$, we have $\outcomes(w_1) = \{\varphi_A, \varphi_D, \varphi_F\}$ and $\mathsf{MP}(w) = \{\varphi_D, \varphi_F\}$ since visiting $D$ and $F$ individually is strictly preferred to $A$, and visiting $D$ and visiting $F$ are incomparable. Similarly, $\mathsf{MP}(w_2) = \{\varphi_C \lor \varphi_B\}$. Therefore, we have $w_1 \succ w_2$ because, condition (1) of strict preference semantics holds for the pair $(\varphi_D, \varphi_C \lor \varphi_B)$ and, condition (2) is also satisfied because $\varphi_F$ is incomparable to $\varphi_C \lor \varphi_B$. 
\end{example}

\section{Automata-Theoretic Computational Model for Incomplete Preferences}  
\label{sec:computational-model}

We now introduce a novel automata-theoretic computational model called a preference \ac{dfa}.
% to encode the preference structure $\succeq$. 

\begin{definition}[Preference \ac{dfa}]
	\label{def:paut}
	A preference \ac{dfa} is the tuple 
	\[
	    \calPA = \langle Q, \Sigma, \delta, \init, F, G \rangle,
	\]
	where $Q, \Sigma, \delta, \init$ are the (finite) set of states, the alphabet, the deterministic transition function, and an initial state, similar to these components in a \ac{dfa}. $F \subseteq Q$ is a set of final states. The last component $G = (\prefnodes, \prefedges)$ is a preference graph, where each node $X \in \prefnodes$ represents a subset of  final states $F$ such that $X_i \cap X_j = \emptyset$ for every $X_i, X_j \in \prefnodes$. The edges $\prefedges \subseteq \prefnodes \times \prefnodes$ is a set of directed edges. 
\end{definition}

% \begin{remark}[Remove later.]
%     Instead of defining $X_1, X_2$ as sets, I'm assigning them as labels to automaton states. 
% \end{remark}

Intuitively, a preference \ac{dfa} $\calPA$ encodes the preference relation $\succeq$ over subsets of words (languages in $\Sigma^\omega$) represented using different classes by defining a preorder over the acceptance conditions (sets of final states). Next, we describe construction a preference \ac{dfa} from a preference structure.

Given a preference structure $\weakpref = \langle P, J \rangle$, the preference \ac{dfa} is constructed in two steps. First, the underlying \ac{dfa} $\langle Q, \Sigma, \delta, \init, F \rangle$ is constructed as a cross product of \ac{dfa}s representing the union of languages of all \ac{scltl} formulas in $\Phi$. Letting $\calA_i = \langle Q_i, \Sigma, \delta_i, \init_i, F_i\rangle$ to be the \ac{dfa} corresponding to $\varphi_i$ for all $1 \leq i \leq n$, we have $Q = Q_1 \times \ldots \times Q_n$, $\delta(q, \sigma) = (\delta_1(q_1, \sigma), \ldots, \delta_n(q_n, \sigma))$, $\init = (\init_1, \ldots, \init_n)$ and $F = (F_1 \times Q_2 \times \ldots \times Q_n) \cup (Q_1 \times F_2 \times \ldots \times Q_n) \cup \ldots \cup (Q_1 \times Q_2 \times \ldots \times F_n)$. By definition, any word that induces a visit to a final state in preference automaton achieves at least one outcome in $\Phi$.

In the second step, we construct the preference graph $G = (\prefnodes, \prefedges)$. Intuitively, every node of the preference graph represents an equivalence class of final states such that any two words that visit any final state represented by the same node are indifferent under $\succeq$. To define the nodes, we first associate each final state with a set of \emph{tags}:

\begin{enumerate}
    \item A tag $x_{ij}$ is associated with a final state $q = (q_1, \ldots, q_n) \in F$ to denote that a word reaching $q$ satisfies a more preferred outcome among $\varphi_i$ and  $\varphi_j$. Hence, $x_{ij}$ is assigned to $q$ iff the following conditions hold:
    \begin{inparaenum}
        \item[(a)] $q_i \in F_i$, 
        \item[(b)] $(\varphi_i, \varphi_j) \in P$, 
        \item[(c)] $\varphi_i \in \mathsf{MP}(\{\varphi_k \in \Phi \mid q_k \in F_k\})$.     
    \end{inparaenum}
    
    \item A tag $y_{ij}$ is associated to a final state $q = (q_1, \ldots, q_n) \in F$ to denote that a word reaching $q$ satisfies the less preferred outcome among $\varphi_i$ and $\varphi_j$. Hence, $y_{ij}$ is assigned to $q$ iff:
    \begin{inparaenum}
        \item[(a)] $q_i \in Q_i \setminus F_i$, 
        \item[(b)] $q_j \in F_j$, 
        \item[(c)] $(\varphi_i, \varphi_j) \in P$, 
        \item[(d)] $\varphi_j \in \mathsf{MP}(\{\varphi_k \in \Phi \mid q_k \in F_k\})$.
    \end{inparaenum}
\end{enumerate}

We denote the set of tags associated to a final state $q \in F$ by $\lambda(q)$. A node $X \in \prefnodes$ represents a set of final states that have the same set of tags. That is, for any $q, q'\in X$, $\lambda(q)=\lambda(q')$. We write $\lambda(X) := \lambda(q)$ to denote the set of tags associated with any final state represented by $X$. An edge $(X_2, X_1)$ in $E$ represents that any final state in $X_1$ is strictly preferred to any final state in $X_2$. Thus, $(X_2, X_1)$ is included in $E$ if and only if
\begin{inparaenum}
    \item[(1)] there exists $1 \leq i, j \leq n$ such that $x_{ij} \in \lambda(X_1)$ and $y_{ij} \in \lambda(X_2)$; 
    \item[(2)] for all $1 \leq i, j \leq n$ such that $x_{ij} \in \lambda(X_1)$ and $y_{ij} \in \lambda(X_2)$ does not hold, $y_{ij} \in \lambda(X_1)$ and $x_{ij} \in \lambda(X_2)$ also does not hold. 
\end{inparaenum}

An edge $(X_1, X_2) \in \prefedges$ is intuitively understood as follows. Condition (1) states that there must exist a pair of \ac{scltl} formulas $\varphi_i, \varphi_j \in \Phi$ such that $(\varphi_i, \varphi_j) \in P$, and any word that visits   $X_2$ must satisfy $\varphi_i$ and any word that visits $X_1$ must satisfy $\neg \varphi_i \land \varphi_j$. Condition (2) asserts that the opposite of condition (1) should never hold. That is, there must not exist a pair of \ac{scltl} formulas $\varphi_i, \varphi_j \in \Phi$ such that $(\varphi_i, \varphi_j) \in P$, and any word that visits   $X_1$ satisfies $\varphi_i$ and any word that visits $X_2$ satisfies $\neg \varphi_i \land \varphi_j$. 

% An edge $(X_1, X_2) \in \prefedges$ is intuitively understood as follows. Let $q_1, q_2 \in F$ be any two final states such that $q_1 \in X_1$ and $q_2 \in X_2$. Then, there exist $\varphi_i, \varphi_j \in \Phi$ such that any word that visits $q_1$ satisfies $\neg \varphi_i \land \varphi_j$ and any word that visits $q_2$ satisfies $\varphi_i$. 

%   \jf{Explain in plain language what the second condition means?}
% Given any two final states $q_1 \in X_1$ and $q_2 \in X_2$, for $(X_2, X_1) \in E$ to hold, the first condition requires existence of a pair of objectives $\varphi_i, \varphi_j \in \Phi$ such that $\varphi_i$ is a most-preferred outcome that is satisfied by visiting $q_1$ and $\varphi_j$ is a most-preferred outcome and $\neg \varphi_i \land \varphi_j$ is satisfied by visiting $q_2$. The second condition may be understood similarly. 
% \jf{the sentences do not read well. can you rewrite it? or just show the proof.}

\begin{example}
    We describe the construction of preference \ac{dfa} for first preference objective (PO1). 
    % The \ac{dfa} representing the language of \ac{scltl} formula $\Eventually A$ is shown in Fig.~\ref{fig:eventually-a}. Thus, 
    The underlying \ac{dfa} of the preference \ac{dfa} for (PO1) is constructed as the union of \ac{dfa}s corresponding to $\Eventually A, \Eventually B, \Eventually C$, and is shown in Fig.~\ref{fig:po1-pref-dfa}. Every state in preference \ac{dfa} is annotated as a tuple $(a_i, b_j, e_k)$ where $i, j, k = 0, 1$. The subscript $i, j, k = 0$ means that corresponding region has been visited. Therefore, all states except $(a_1, b_1, e_1)$ are final states since at least one of the formulas is satisfied in all states but $(a_1, b_1, e_1)$. 
    
    % \begin{figure}
    %     \centering
    %     \begin{tikzpicture}[->,>=stealth',shorten >=1pt,auto,node distance=3cm, scale = 0.65,transform shape,align=center]
    %     	\node[state] (0) {\LARGE $1$};
    %     	\node[state,accepting] (1) [right=1.5cm of 0] {\LARGE $0$};
        	
    %     	\path 
    %     	(0) edge node {\LARGE $A$} (1)
    %     	(0) edge[loop above] node {\LARGE $\top$} (0)
    %     	(1) edge[loop above] node {\LARGE $\top$} (1)
    %     	;
    %     \end{tikzpicture}
    %     \caption{DFA representing the language of $\Eventually A$.}
    %     \label{fig:eventually-a}
    % \end{figure}

    \begin{figure*}[t!]
        \centering
        \begin{tikzpicture}[->,>=stealth',shorten >=1pt,auto,node distance=3cm, scale = 0.37,transform shape,align=center]
	\node[state,ellipse,minimum width=10cm,minimum height=1cm] (0) {\huge $(a_1, b_1, e_1), \{\}$};
	\node[state,ellipse,minimum width=10cm,minimum height=1cm,accepting] (1) [below of=0] {\huge $(a_0, b_1, e_1), \{y_{EA}, y_{BA}\}$};
	\node[state,ellipse,minimum width=10cm,minimum height=1cm,accepting] (2) [right=2cm of 0] {\huge $(a_1, b_1, e_0), \{x_{EA}\}$};
	\node[state,ellipse,minimum width=10cm,minimum height=1cm,accepting] (3) [below of=2] {\huge $(a_0, b_1, e_0), \{y_{BA}\}$};
	\node[state,ellipse,minimum width=10cm,minimum height=1cm,accepting] (4) [right=2cm of 2] {\huge $(a_1, b_0, e_1), \{x_{BA}\}$};
	\node[state,ellipse,minimum width=10cm,minimum height=1cm,accepting] (5) [below of=4] {\huge $(a_0, b_0, e_1), \{y_{EA}\}$};
	\node[state,ellipse,minimum width=10cm,minimum height=1cm,accepting] (6) [right=2cm of 4] {\huge $(a_1, b_0, e_0), \{x_{BA}, x_{EA}\}$};
	\node[state,ellipse,minimum width=10cm,minimum height=1cm,accepting] (7) [below of=6] {\huge $(a_0, b_0, e_0), \{x_{BA}, x_{EA}\}$};
	
	\path 
	(0) edge node {\huge $A$} (1)
	(2) edge node {\huge $A$} (3)
	(4) edge node {\huge $A$} (5)
	(6) edge node {\huge $A$} (7)
	
	(0) edge node {\huge $E$} (2)
	(4) edge node {\huge $E$} (6)
	(1) edge node {\huge $E$} (3)
	(5) edge node {\huge $E$} (7)
	
	(0) edge[out=20,in=160] node[above] {\huge $B$} (4)
	(2) edge[out=20,in=160] node[above] {\huge $B$} (6)
	
	(1) edge[out=340,in=200] node[above] {\huge $B$} (5)
	(3) edge[out=340,in=200] node[above] {\huge $B$} (7)
	;
\end{tikzpicture}
       \vspace{-2ex} \caption{Preference DFA representing preference objective (PO1).}
        \label{fig:po1-pref-dfa}
    \end{figure*}
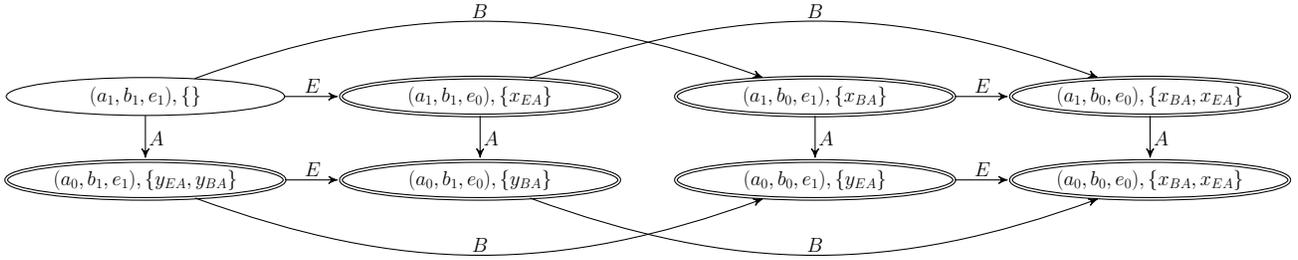
    
    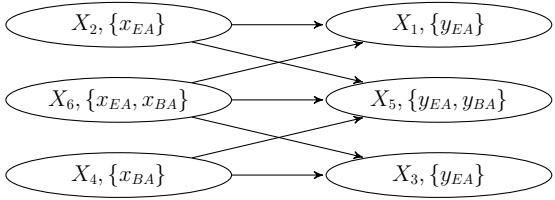
\begin{figure}
        \centering
        % \begin{tikzpicture}[->,>=stealth',shorten >=1pt,auto,node distance=3cm, scale = 0.4,transform shape,align=center]
% 	\node[state,ellipse,minimum width=6cm,minimum height=1cm] (0) {\LARGE $X_1, \{x_{EA}\}$};
% 	\node[state,ellipse,minimum width=6cm,minimum height=1cm] (1) [right=2.5cm of 0] {\LARGE $X_2, \{y_{EA}\}$};
% 	\node[state,ellipse,minimum width=6cm,minimum height=1cm] (2) [below of=0] {\LARGE $X_5, \{x_{EA}, x_{BA}\}$};
% 	\node[state,ellipse,minimum width=6cm,minimum height=1cm] (3) [right=2.5cm of 2] {\LARGE $X_6, \{y_{EA}, y_{BA}\}$};
% 	\node[state,ellipse,minimum width=6cm,minimum height=1cm] (4) [below of=2] {\LARGE $X_3, \{x_{BA}\}$};
% 	\node[state,ellipse,minimum width=6cm,minimum height=1cm] (5) [right=2.5cm of 4] {\LARGE $X_4, \{y_{EA}\}$};
	
% 	\path 
% 	(1) edge node {} (0)
% 	(1) edge node {} (2)
% 	(3) edge node {} (0)
% 	(3) edge node {} (2)
% 	(3) edge node {} (4)
% 	(5) edge node {} (2)
% 	(5) edge node {} (4)
% 	;
% \end{tikzpicture}

\begin{tikzpicture}[->,>=stealth',shorten >=1pt,auto,node distance=2cm, scale = 0.5,transform shape,align=center]
	\node[state,ellipse,minimum width=6cm,minimum height=1cm] (0) {\LARGE $X_2, \{x_{EA}\}$};
	\node[state,ellipse,minimum width=6cm,minimum height=1cm] (1) [right=2.5cm of 0] {\LARGE $X_1, \{y_{EA}\}$};
	\node[state,ellipse,minimum width=6cm,minimum height=1cm] (2) [below of=0] {\LARGE $X_6, \{x_{EA}, x_{BA}\}$};
	\node[state,ellipse,minimum width=6cm,minimum height=1cm] (3) [right=2.5cm of 2] {\LARGE $X_5, \{y_{EA}, y_{BA}\}$};
	\node[state,ellipse,minimum width=6cm,minimum height=1cm] (4) [below of=2] {\LARGE $X_4, \{x_{BA}\}$};
	\node[state,ellipse,minimum width=6cm,minimum height=1cm] (5) [right=2.5cm of 4] {\LARGE $X_3, \{y_{EA}\}$};
	
	\path 
	(0) edge node {} (1)
	(2) edge node {} (1)
	(0) edge node {} (3)
	(2) edge node {} (3)
	(4) edge node {} (3)
	(2) edge node {} (5)
	(4) edge node {} (5)
	;
\end{tikzpicture}
               \vspace{-2ex}
 \caption{Preference graph corresponding to preference DFA in Fig.~\ref{fig:po1-pref-dfa}. }
        \label{fig:po1-pref-graph}
        \vspace{-3ex}
    \end{figure}
    
    Each final state is assigned a set of labels. For instance, the state $\lambda((a_1, b_0, e_0)) = \{x_{BA}, x_{EA}\}$ \footnote{We use $A,B,E$ in places of numerical indices. } since by any word that visits the state satisfies $\Eventually B$ and $\Eventually E$. This results in $6$ unique labels corresponding to a different class of equivalent words in $\Sigma^\omega$ that visit that final state. These classes form the nodes $X_k$ for $k = 1\ldots6$ of the preference graph shown in Fig.~\ref{fig:po1-pref-graph}. An edge $(X_2, X_5)$ expresses that any word that visits $X_5$ is strictly preferred to any word that visits $X_2$ but not $X_5$. Similarly, any word that visits  $X_4$ only  is incomparable to any word that visits $X_6$ only.
\end{example}

% \begin{lemma}
%     Given any word $w \in \Sigma^\omega$, the induced path in preference \ac{dfa} visits finitely many final states. 
% \end{lemma}

% \begin{theorem}
%     Formalize: $w_1$ is strictly preferred to $w_2$ if and only if, given last final state $q_1 \in X_1$ in $\delta(\init, w_1)$ and last final state $q_2 \in X_2$ in $\delta(\init, w_2)$, we have $(X_2, X_1) \in E$. 
% \end{theorem}

% \begin{theorem}
%     Formalize: $w_1$ is indifferent to $w_2$ if and only if, given last final state $q_1 \in X_1$ in $\delta(\init, w_1)$ and last final state $q_2 \in X_2$ in $\delta(\init, w_2)$, we have $X_2 = X_1$. 
% \end{theorem}

\section{Opportunistic Qualitative Planning with Incomplete Preferences}  
\label{sec:planning}

In this section, we define two types of strategies,
% namely \emph{safe and positively improving (SPI)} and \emph{safe and almost-surely improving (SASI)} strategy, and show how to synthesize them. The SPI and SASI achieve as highly preferred outcome as possible by
that exploit the \emph{opportunities} that arise due to stochasticity with a positive probability or with probability one, respectively.

\begin{definition}[Product of an \ac{mdp} with a Preference \ac{dfa}] 
	\label{def:product-mdp} 
	Given an \ac{mdp} $M=\langle S, A, P, \calAP, L\rangle$ and the preference \ac{dfa} $\calPA = \langle Q, \Sigma, \delta, \init, F, G = (\prefnodes, \prefedges) \rangle$, the product of \ac{mdp} with preference \ac{dfa} is defined as the tuple,
	\[
	\calM := \langle  V, A, \Delta, \calF, \mathcal{G} \rangle,
	\] 
	where $V \coloneqq S \times Q$ is the finite set of states. $A$ is the same set of actions as $M$. The transition function $\Delta: V \times A \rightarrow \dist(V)$ is defined as follows: for any states $(s, q), (s', q') \in V$ and any action $a \in A$, $\Delta((s',q') \mid (s, q), a) = P(s' \mid s, a)$ if $q' \in \delta(q, L(s'))$ and $0$ otherwise. $\calF \subseteq V$ is the set of final states by reaching which at least some outcome is achieved. The component $\mathcal{G} = (\augnodes, \augedges)$ is a graph where $\augnodes := \{S \times X \mid X \in \prefnodes\}$ is the set of nodes and $\augedges$ is a set of edges such that, for any $\augnode_i = S \times X_i$ and $\augnode_j = S \times X_j$,  $(\augnode_i, \augnode_j) \in \augedges$ if and only if $(X_i,X_j)\in \prefedges$. 
\end{definition}

% A path in the product \ac{mdp} is an infinite, ordered sequence of states denoted as $\rho = v_0 v_1 v_2 \ldots \in V^\omega$. 
% We write $\rho_i$ to denote the $i$-th state in a path $\rho$. The set of states that appear in $\rho$ is denoted by $\occ(\rho) = \{v \in V \mid \exists i \geq 0: \rho_i = v\}$. 
% Any path may satisfy multiple outcomes since it may visit multiple final states, each of which could be represented by different preference nodes in $\augnodes$. Let $\mathsf{Nodes}(\rho) := \{\augnode \in \augnodes \mid \exists i: \rho_i \in \augnode\}$ denote the set of nodes visited by a path $\rho$. Slightly abusing the notation, we denote the set of most-preferred nodes visited by $\rho$ as $\mathsf{MP}(\rho) = \{\augnode \in \mathsf{Nodes}(\rho) \mid \nexists \augnode' \in \mathsf{Nodes}(\rho) : (\augnode, \augnode') \in \augedges\}$.  Given two paths $\rho_1, \rho_2$ in product \ac{mdp}, $\rho_1$ is \emph{strictly preferred} to $\rho_2$ if there exists $W_1 \in \mathsf{MP}(\rho_1)$ and $W_2 \in \mathsf{MP}(\rho_2)$ such that $(W_2, W_1) \in \augedges$, and there does not exist $W_1 \in \mathsf{MP}(\rho_1)$ and $W_2 \in \mathsf{MP}(\rho_2)$ such that $(W_1, W_2) \in \augedges$. $\rho_1$ is indifferent to $\rho_2$ if $\mathsf{MP}(\rho_1) = \mathsf{MP}(\rho_2)$. Otherwise, the two paths are incomparable.
In the product construction,
an edge $(W_i, W_j) \in \augnodes$ denotes that any path $\rho \in V^\ast$ that reaches  $W_j$ is strictly preferred to any path $\rho' \in V^\ast$ that reaches $W_i$ but not $W_j$ under the given preference. Thus, we transform the preference over words given by the preference \ac{dfa} to a preference over outcomes, each of which reaches a subsets of states in $\augnodes$.
% 
% 
% A preference satisfying strategy visits as highly preferred node as possible. 
For each node   $\augnode \in \augnodes$, we can compute a set of states, denoted $\asw(\augnode)$, from which the agent has a strategy to reach $\augnode$ with probability one, using the solution of almost-sure winning in \ac{mdp}s with reachability objective \cite{baier2008principles}.  It is possible that $v\in \asw(\augnode)$ and $v\in \asw(\augnode')$ where $\augnode\ne \augnode'$ and $(\augnode,\augnode')\in \augedges$. In this case, a preference satisfying strategy must visit the   preferred node $\augnode'$. To generalize, 
let $Z\subseteq \augnodes$ be a subset of nodes in the product, we overload the notation $\mathsf{MP}$ such that $\mathsf{MP}(Z) =\{\augnode \in Z\mid \nexists \augnode'\in Z, (\augnode,\augnodes')\in \augedges\}$. A preference satisfying strategy from $v$ must visit a node in $\mathsf{MP}(Z_v)$ where $Z_v= \{\augnode\in \augnodes\mid v\in \asw(\augnode)\}$.

% From any  state $v \in V$, a preference satisfying strategy must visit a node that is no worse than the most-preferred almost-surely reachable node in $\{W \in \augnodes \mid v \in \asw(W)\}$. 
% Recall that, from any state, an almost-sure winning strategy ensures to achieve the given outcome with probability one. Thus, any outcome achieved by an preference satisfying strategy should be no worse than that achieved by any almost-sure winning strategies that achieve a most-preferred satisfiable outcome from the given state. 
However, at some states, the uncertainty may create opportunities to transition from the state $v$ to $v' \in V$ such that a more preferred node can be reached almost-surely from $v'$.
% such that a node $W \in \augalmostnodes$ can be reached almost-surely from $v'$ such that $W$ is either more preferred to or incomparable to every node $W' \in \{Y \in \augnodes \mid v \in \asw(Y)\}$.
% as compared to the most-preferred outcomes that may be achieved from $v$. 
We call such a transition to be an \emph{improvement}.   

% Starting from a state $v \in V$, a preference satisfying strategy should achieve as highly preferred outcome as possible. If there exists a set of outcomes represented by a preference node $\augnode \in \augnodes$ that can almost-surely be reached from state $v$, then a preference satisfying strategy should achieve an outcome that is no worse than $\augnode$. When there are more than one outcomes that can almost-surely be acheived from state $v$, a preference satisfying strategy should at least achieve the most-preferred among them. For this purpose, the strategy may follow an almost-sure winning strategy corresponding to visit a most-preferred outcome \ak{ambiguous: most-preferred among what?}. However, at some states, it may be possible for the game to visit a state from which a more preferred outcome than $\augnode$ can be achieved almost-surely. When such a transition occurs, we say that an improvement has been made.  

% Next, we formally define what we mean by an \emph{improvement}. Note that every state in product \ac{mdp} may almost-surely reach more than one equivalence class of outcomes, i.e., a node in $\cal G$. 

\begin{definition}[Improvement]
\label{def:improvement}
    Given any states $v_1, v_2 \in V$, $v_2$ is said to be an \emph{improvement} over $v_1$ if and only if there exists a pair of preference nodes $W_1 \in \mathsf{MP}(\{W \in \augnodes \mid v_1 \in \asw(W)\})$ and $W_2 \in \mathsf{MP}(\{W \in \augnodes \mid v_2 \in \asw(W)\})$ such that $(W_1, W_2) \in \prefedges$.
    % and there does not exist $X_1 \in \mathsf{MP}(\{X \in \prefnodes \mid v_1 \in \asw(X)\})$ and $X_2 \in \mathsf{MP}(\{X \in \prefnodes \mid v_2 \in \asw(X)\})$ such that $(X_2, X_1) \in \prefedges$. 
\end{definition}

 A transition from state $v \in V$ to $v' \in V$ is said to be \emph{improving} if $v'$ is an improvement over $v$.

% In words a transition from $v_1$ to $v_2$ is an improvement if a more preferred equivalence class is almost-surely achievable from $v_2$ than that some most-preferred equivalence class that can almost-surely be achieved from  $v_1$. Note that there cannot be a node $W \in \mathsf{MP}(\{W \in \augnodes \mid v_2 \in \asw(W)\})$ that is more preferred to $W_1$ because otherwise, by transitivity, $W$ will be more preferred to $W_1$---contradiction to definition of $\mathsf{MP}(??)$. 

% To achieve as highly preferred outcome as possible in the \ac{mdp} $M$, the agent must capitalize on the opportunities to make improvements. Moreover, whenever possible, it must select actions that may lead to multiple improvements. 
% it must also select actions strategically so that as many opportunities as possible may be exploited to make sequential improvements. 
% Similar to qualitative solution concepts of positive and almost-sure winning strategies in games on graphs \cite{}, 

% We define two qualitative criteria called \emph{safe and positively improving (SPI)} and \emph{safe and almost-surely improving (SASI)}. The SPI and SASI strategies are opportunistic, preference strategies that induce improvements with positive probability or with probability one, respectively. 

\begin{definition}
\label{def:spi-strategy}
    A strategy $\pi: V \rightarrow 2^A \cup \{\uparrow\}$ \footnote{$\pi(v)=\uparrow$ means the function $\pi$ is undefined at $v$.} is said to be \emph{safe and positively improving (resp., safe and almost-surely improving)} if, the following conditions hold for any state $v \in V$ such that $\pi(v) \neq \uparrow$:
    (a) there exists (resp., for all) a path $\rho$ in $\calM_{\pi}$ with $\rho[0] = v$ such that, for some $i \geq 0$, $\rho[i+1]$ is an improvement over $\rho[i]$;
    (b) there does not exist a path $\rho$ in $\calM_{\pi}$ with $\rho[0] = v$ such that, for some $i \geq 0$, $\rho[i]$ is an improvement over $\rho[i+1]$. 
\end{definition}

Intuitively, the SPI and SASI strategies exploit opportunities by inducing an improving transition with a positive probability and with probability one, respectively.

We now define a new model called \emph{an improvement \ac{mdp}} that differentiates the states reached by  improving transitions.

\begin{definition}[Improvement \ac{mdp}]
\label{def:improvement-mdp}
	Given a product \ac{mdp} $\calM$, an \emph{improvement \ac{mdp}} is the tuple,
	\[
	    \tilde M = \langle \tilde V, A, \tilde \Delta,  \tilde \calF \rangle,
    \] 
    where $\tilde V = \{(v, \top), (v, \bot) \mid v \in V\}$ is the set of states, $ \tilde \calF = \{(v, \top) \mid v \in V\}$ is the set of final states. An action $a \in A$ is enabled at a state $v \in \tilde V$ if and only if for for any $v'$ such that $\Delta(v,a,v')>0$, $v$ is \emph{not} an improvement over $v'$. 
    %every successor state $v' \in \mathsf{Succ}(v, a)$, we have that  $v'$ is not an improvement over $v$. 
    The transition function $\tilde \Delta: \tilde V \times A \rightarrow \dist(\tilde V)$ is defined as follows:
    For any $v\in V$, for an action $a$ \emph{enabled from} $v$, if $\Delta(v,a,v')>0 $ and $v'$ is an improvement from $v$, then  let $\tilde \Delta((v, \bot), a, (v', \top)) = \Delta(v, a, v')$. Else, if $\Delta(v,a,v')>0 $ and $v'$ is not an improvement from $v$, then let $\tilde \Delta((v, \bot), a, (v', \bot)) = \Delta(v, a, v')$ and $\tilde \Delta((v, \top),  a, (v', \bot)) = \Delta(v, a, v')$.
\end{definition}

\begin{theorem}
    The following statements hold for any state $v \in V$ in product \ac{mdp}. 
    \begin{enumerate}
        \item An SPI strategy at $v$ is a positive winning strategy in improvement \ac{mdp} at the state $(v, \bot)$ to visit $\tilde \calF$. 
        
        \item An SASI strategy at $v$ is an almost-sure winning strategy in improvement \ac{mdp} at the state $(v, \bot)$ to visit $\tilde \calF$. 
    \end{enumerate}
\end{theorem}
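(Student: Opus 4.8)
The plan is to establish both directions by carefully unpacking the definitions of SPI/SASI strategies (Definition~\ref{def:spi-strategy}) and the improvement \ac{mdp} (Definition~\ref{def:improvement-mdp}), and showing that the two conditions (a) and (b) defining a safe-and-improving strategy correspond exactly to reachability of $\tilde\calF$ (with positive probability, resp.\ probability one) combined with the action-enabling constraint in $\tilde M$. I would prove statement (1) and (2) together, since they differ only in swapping ``there exists a path'' for ``for all paths'', which on the \ac{mdp} side is precisely the difference between positive winning and almost-sure winning for the reachability objective $\tilde\calF$.

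First I would argue that condition (b) of Definition~\ref{def:spi-strategy} --- no path along which an improving transition is traversed \emph{backwards}, i.e.\ $\rho[i]$ is an improvement over $\rho[i+1]$ --- is equivalent to the requirement that $\pi$ only uses actions enabled in $\tilde M$. Indeed, an action $a$ is enabled at $v$ in $\tilde M$ iff for every $v'$ with $\Delta(v,a,v')>0$, $v$ is not an improvement over $v'$; so a strategy satisfies (b) at every reachable state exactly when, on the support of every move it makes, it never creates such a ``backwards'' edge. This lets me treat (b) as a structural restriction encoded directly into $\tilde M$'s enabled-action sets, so the remaining content is purely about (a).

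Next I would set up the correspondence between paths in $\calM_\pi$ and paths in $\tilde M$ under the lifted strategy. The state component $\bot/\top$ is a monotone ``have we taken an improving transition yet'' flag: by construction of $\tilde\Delta$, a run started at $(v,\bot)$ stays in the $\bot$-copy until the first improving transition (as defined by Definition~\ref{def:improvement}) is taken, at which point it moves to a $\top$-state and (since any $\top$-state transitions only to $\bot$-states, but this does not matter for reachability of $\tilde\calF=\{(v,\top)\}$) the objective is already met. Hence a path $\rho$ in $\calM$ starting at $v$ contains an improving transition $\rho[i]\to\rho[i+1]$ for some $i$ if and only if its lift starting at $(v,\bot)$ reaches $\tilde\calF$. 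Combining this with the probability-quantifier correspondence: ``there exists such a path'' on the MDP side with $\pi$ fixed is equivalent to ``$\tilde\calF$ is reached with positive probability'' (this uses that reachability along \emph{some} path of positive probability in the induced chain is the same as positive reachability probability, which holds for finite MDPs), giving statement (1); and ``for all paths'' is equivalent to ``$\tilde\calF$ is reached with probability one'', giving statement (2).

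The main obstacle I anticipate is the bookkeeping around the $\top$-states and the enabled-action constraint: one must check that restricting to $\tilde M$'s enabled actions does not accidentally forbid an action that an SPI/SASI strategy would legitimately use, and conversely that the positive/almost-sure winning strategy in $\tilde M$, when projected back to $V$, satisfies condition (b) at \emph{every} state it visits (not just the start), which requires an inductive argument over reachable states. A second subtlety is the exact quantifier translation in (a): I should be careful that "for all paths $\rho$ in $\calM_\pi$, some improving transition occurs" is genuinely the same as "$\Pr_{(v,\bot)}^{\tilde\pi}(\Diamond\,\tilde\calF)=1$"; this is standard for finite MDPs but deserves an explicit appeal to the characterization of almost-sure reachability (e.g.\ \cite[Chap.~10]{baier2008principles}) --- in particular one uses that a finite-state Markov chain reaches a target set with probability one iff the target is reachable from every state in the chain's reachable subgraph. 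Once these points are nailed down, the theorem follows by threading the two equivalences together.
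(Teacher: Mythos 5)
Your proposal follows essentially the same route as the paper's own (sketch) proof: condition (b) of the SPI/SASI definition corresponds to the enabled-action restriction built into the improvement MDP, reaching $\tilde\calF$ corresponds exactly to having taken an improving transition, and the exists/for-all quantifiers map to positive/almost-sure reachability, with your write-up simply supplying more bookkeeping detail than the paper's sketch. The only caveat is your claim that ``for all paths an improvement occurs'' is \emph{equivalent} to probability-one reachability --- it is in fact sure (hence stronger than almost-sure) reachability, so only an implication holds --- but that implication is precisely the direction the theorem asserts, so your argument still goes through.
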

\begin{proof}[Proof (Sketch)]
    Statement (1). By construction, any action which induces a transition that violates condition (b) in Def.~\ref{def:spi-strategy} with positive probability is disabled in the improvement \ac{mdp}. Also, by construction, any final state in $\calF$ can only be reached by making an improvement. Hence, a positive winning strategy in improvement \ac{mdp} which visits $\tilde \calF$ satisfies condition (a) in Def.~\ref{def:spi-strategy}. The proof of statement (2) is similar to that of statement (1). 
\end{proof}

The SPI and SASI strategies may exploit multiple opportunities by inducing sequential improvements: Whenever the agent reaches a state $ (v,\top)\in \tilde V$, he will check if a SPI (or SASI) strategy exists for $(v,\bot)$. If yes, then the agent will carry out the SPI (or SASI) strategy. Otherwise, the agent will 
carry out the almost-sure winning strategy for one of the  most preferred and satisfied objective at $v$.

\begin{example}
    Consider the case when robot is at $(2, 1)$ with $4$ units of battery and is to satisfy (PO1). Although the robot cannot almost-surely visit either $B$ or $E$ individually, it can almost-surely visit one of $B, E$ by moving \texttt{West}. Since visiting both $B$ and $E$ is strictly preferred to visiting $A$, moving \texttt{West} is a safe and almost-surely improving strategy at $(2, 1)$. Instead of $4$ units, if the robot starts with $2$ units of battery, it can reach neither of $A, B$ or $C$ almost-surely. In this case, the SASI strategy is undefined. The SPI strategy is to choose \texttt{West} because, with positive probability, it leads to cells $(1, 2), (1, 0)$ with $1$ units of battery remaining. From these states, one of $B, E$ can be reached almost-surely.

    Consider the robot whose objective is (PO2) starting at the cell $(2, 1)$ with $4$ units of battery. From this state, only $A$ can be visited almost-surely. The SASI strategy at $v_0$ is to move \texttt{North} because, with positive probability, the robot would reach one of the cells---$(1, 2), (2, 2), (2, 3)$---with $3$ units of battery remaining. Since from each of these states at least one of $B, C, D, F$ can almost-surely be achieved, the robot almost-surely makes an improvement. Suppose the robot reaches $(2, 3)$ with $3$ units of battery. From this state, only visiting $C$ is almost-surely winning. However, the SASI strategy is to move \texttt{North} and then \texttt{East}, thereby ensuring a visit to either $F$ or $D$ with probability one. Hence, we see that SASI strategy not only plans for a single improvement, but it may also induces multiple sequential improvements.  
\end{example}

% \textbf{Complexity.} The worst-case time complexity to synthesize SASI strategy is $\calO(K \cdot size(\tilde \calM)^2)$. Because it is necessary to compute, for each rank, the almost-sure winning region in $\tilde \calM$, which has worst-case complexity of $\calO(size(\tilde \calM)^2)$ \cite[Ch. 10, Alg. 45]{baier2008principles}. By similar reasoning, the worst-case time complexity to synthesize SPI strategy is $\calO(K \cdot size(\tilde \calM))$. 

\section{Conclusion}
	\label{sec:conclude}
	In this work, we propose a language to specify incomplete preferences as a pre-order over temporal objectives. This allows us to synthesize qualitatively plans even when some outcomes are incomparable. We define two types of opportunistic strategies that strategically, and whenever possible, improve the   outcome they can achieve  sequentially. Our work provides a method for stochastic planning with incomplete preferences over  a subclass of temporal logic objectives. Building on this work, we consider a number of future directions: 1) we will consider a preference over temporal objectives that encompass more general \ac{ltl} properties such as safety, recurrence, and liveness; 2) we will build on the qualitative reasoning to study quantitative planning with such preference specifications. The later requires us to  jointly consider how well (the probability) an objective is satisfied and how preferred is the objective.

\bibliographystyle{IEEEtran}
\bibliography{refs}

\end{document}